\documentclass[conference,10pt,letterpaper]{IEEEtran}

\usepackage{amsmath,amssymb,amsthm}
\usepackage{xifthen}
\usepackage{mathtools}
\usepackage{enumerate}
\usepackage{microtype}
\usepackage{xspace}
\usepackage{bm}
\usepackage[T1]{fontenc}
\usepackage{lastpage}
\usepackage{bbm}
\usepackage[skipabove=10pt]{mdframed}

\newcommand{\mbs}[1]{\pmb{#1}}
\newcommand{\vect}[1]{{\lowercase{\mbs{#1}}}}
\newcommand{\mat}[1]{{\uppercase{\mbs{#1}}}}

\newcommand{\rv}[1]{{\mathsf{#1}}}
\newcommand{\rvVec}[1]{{\pmb{\mathsf{#1}}}}
\newcommand{\rvMat}[1]{{\pmb{\mathsf{#1}}}}

\newcommand{\T}{{\scriptscriptstyle\mathsf{T}}}
\renewcommand{\H}{{\scriptscriptstyle\mathsf{H}}}

\newcommand{\cond}{\,\vert\,}
\renewcommand{\Re}[1][]{\ifthenelse{\isempty{#1}}{\operatorname{Re}}{\operatorname{Re}\left(#1\right)}}
\renewcommand{\Im}[1][]{\ifthenelse{\isempty{#1}}{\operatorname{Im}}{\operatorname{Im}\left(#1\right)}}

\newcommand{\SNR}{\mathsf{snr}}

\newcommand{\vv}{\vect{v}}

\newcommand{\Sigmam}{\pmb{\Sigma}}

\newcommand{\Am}{\mat{a}}
\newcommand{\Bm}{\mat{b}}

\newcommand{\Mm}{\mat{M}}

\newcommand{\Qm}{\mat{q}}

\newcommand{\Um}{\mat{u}}
\newcommand{\Vm}{\mat{V}}

\newcommand{\Xm}{\mat{x}}


\newcommand{\Ic}{{\mathcal I}}

\newcommand{\Sc}{{\mathcal S}}

\newcommand{\CC}{\mathbb{C}}

\newcommand{\Id}{\mat{\mathit{I}}}

\newcommand{\CN}[1][]{\ifthenelse{\isempty{#1}}{\mathcal{N}_{\mathbb{C}}}{\mathcal{N}_{\mathbb{C}}\left(#1\right)}}

\renewcommand{\P}[1][]{\ifthenelse{\isempty{#1}}{\mathbb{P}}{\mathbb{P}\left(#1\right)}}
\newcommand{\E}[1][]{\ifthenelse{\isempty{#1}}{\mathbb{E}}{\mathbb{E}\left[#1\right]}}
\newcommand{\I}[1][]{\ifthenelse{\isempty{#1}}{\mathbb{I}}{\mathbb{I}\left\{#1\right\}}}
\renewcommand{\det}[1][]{\ifthenelse{\isempty{#1}}{\mathrm{det}}{\mathrm{det}\left(#1\right)}}
\newcommand{\trace}[1][]{\ifthenelse{\isempty{#1}}{\mathrm{tr}}{\mathrm{tr}\left(#1\right)}}
\newcommand{\rank}[1][]{\ifthenelse{\isempty{#1}}{\mathrm{rank}}{\text{rank}\left(#1\right)}}
\newcommand{\diag}[1][]{\ifthenelse{\isempty{#1}}{\mathrm{diag}}{\text{diag}\left(#1\right)}}
\newcommand{\Cov}[1][]{\ifthenelse{\isempty{#1}}{\mathsf{Cov}}{\mathsf{Cov}\left(#1\right)}}


\newcommand{\defeq}{\triangleq}
\newcommand{\eqdef}{\triangleq}

\newtheorem{remark}{Remark}
\newtheorem{theorem}{Theorem}
\newtheorem{corollary}{Corollary}
\newtheorem{lemma}{Lemma}


%
%
%


\IfFileExists{MinionPro.sty}{
}{
}




\newcommand{\Span}[1][]{\ifthenelse{\isempty{#1}}{\operatorname{Span}}{\operatorname{Span}\left(#1\right)}}

\newcommand{\floor}[1]{\left\lfloor #1 \right\rfloor}

\newcommand*\dif{\mathop{}\!\mathrm{d}}

\renewcommand{\SNR}{{\rm SNR}}

\renewcommand{\defeq}{:=}
\renewcommand{\eqdef}{=:}

\newcommand{\ind}[1]{{\mathbbm{1}\!\left\{#1\right\}}}

\usepackage{subfigure}
\usepackage{multirow}
\usepackage{array}
\usepackage{amsmath}
\usepackage[table,dvipsnames]{xcolor}
\usepackage{enumitem}
\usepackage{stfloats}
\usepackage[toc,acronym,nonumberlist,nopostdot]{glossaries}
\makeindex
\makenoidxglossaries
\usepackage{array,makecell,booktabs}
\usepackage{tabstackengine}
\newcolumntype{M}[1]{>{\centering\arraybackslash}m{#1}}
\usepackage{multicol,multirow}
\usepackage{tikz,pgfplots}
\usetikzlibrary{shapes}
\usetikzlibrary{plotmarks}
\usetikzlibrary{spy}
\newcommand{\minNT}{{\underline{S}}}
\newcommand{\maxNT}{{\overline{S}}}
\newcommand{\minMNT}{{L}}

\allowdisplaybreaks

\title{The Optimal DoF for the Noncoherent MIMO Channel with Generic Block Fading} 

\author{\IEEEauthorblockN{Khac-Hoang Ngo\IEEEauthorrefmark{1}, Sheng Yang\IEEEauthorrefmark{2}, Maxime Guillaud\IEEEauthorrefmark{3}} \vspace{.1cm}
	\IEEEauthorblockA{ \IEEEauthorrefmark{1}Department of Electrical Engineering, Chalmers University of Technology, 41296 Gothenburg, Sweden\\
		\IEEEauthorrefmark{2}Laboratory of Signals and Systems, CentraleSup\'elec, Paris-Saclay University, 91190 Gif-sur-Yvette, France \\
		\IEEEauthorrefmark{3}Mathematical and Algorithmic Sciences Laboratory, Huawei Technologies France, 92100 Boulogne-Billancourt, France}
	Emails: {ngok@chalmers.se, sheng.yang@centralesupelec.fr, maxime.guillaud@huawei.com} \vspace{-.4cm}
}

\mathtoolsset{showonlyrefs}

\begin{document}
	
\maketitle
\date{\today}
\begin{abstract}
  The high-SNR capacity of the noncoherent MIMO channel has been derived for the case of independent and identically distributed (IID)  Rayleigh block fading by exploiting the Gaussianity of the channel matrix. This implies the optimal degrees of freedom (DoF), i.e., the capacity pre-log factor. Nevertheless, as far as the optimal DoF is concerned, IID Rayleigh fading is apparently a sufficient but not necessary condition. In this paper, we show that the optimal DoF for the IID Rayleigh block fading channel is also the optimal DoF for a more general class of {\em generic} block fading channels, in which the random channel matrix has finite power and finite differential entropy. Our main contribution is a novel converse proof based on the duality approach.
%
\end{abstract}

 \begin{IEEEkeywords}
 	noncoherent communications, MIMO, degrees of freedom, block fading 
 \end{IEEEkeywords}

\section{Introduction}
Multiple-input multiple-output~(MIMO) technology, consisting in transmitting and/or receiving with multiple antennas,
has been an efficient solution to exploit the extra spatial degrees of freedom~(DoF) in wireless communications. Under the ideal assumption that the
channel matrix is well conditioned and known to either end of the channel, it was shown that the capacity of a point-to-point MIMO
channel scales linearly with the number of antennas as $C = \min\left\{ M, N \right\} \log \SNR + O(1)$ in the high signal-to-noise ratio~(SNR)
regime, where $M$ and $N$ are the numbers of transmit and receive antennas, respectively \cite{Telatar1999capacityMIMO, Foschini}. The DoF, defined as the pre-log
of the capacity at high SNR, is $\min\left\{ M, N \right\}$ in this case. 
In practice, however, the channel matrix varies over time and is not known \emph{a priori}. Communication without \emph{a priori} channel state information~(CSI) is said to be \emph{noncoherent}.

In this paper, we consider a noncoherent $M\times N$ MIMO channel. Under stationary fast Rayleigh fading, i.e., the channel changes independently after each channel use, it was shown that the channel capacity scales double-logarithmically with the SNR in the single-input single-output ($M=N=1$) case~\cite{Taricco1997capacity_noCSI}. This result was then generalized to the MIMO case in~\cite{Moser}, where the authors showed that the capacity scales as
$C = \log\log \SNR + \chi(\rvMat{H}) + o(1)$ where $\rvMat{H}$ is the channel matrix and $\chi(\rvMat{H})$ is
called the fading number of the channel. This implies a zero DoF. Remarkably, the Rayleigh fading assumption was not needed in~\cite{Moser}. Instead, it was broadly assumed that the channel matrix has finite differential entropy and finite second moment. We refer to this fading model as {\em generic} fading. Under block fading, i.e., the channel matrix is assumed to remain constant during each coherence block of $T$ channel uses and varies independently between blocks, high-SNR approximations of the capacity have been derived for the Rayleigh fading case only~\cite{Marzetta1999capacity,Hochwald2000unitaryspacetime,ZhengTse2002Grassman,Yang2013CapacityLargeMIMO}. The optimal DoF was shown to be 
\begin{equation} \label{eq:opt_DoF_p2p_Rayleigh}
	d_{\rm opt} = M^*\Big(1-\frac{M^*}{T}\Big),
\end{equation}
with $M^* \!\defeq\! \min\left\{ M, N, \lfloor
T/2 \rfloor \right\}$,
and can be achieved either by well-designed space-time
modulations~\cite{Hochwald2000unitaryspacetime,ZhengTse2002Grassman,
Yang2013CapacityLargeMIMO}, or by simple training-based
strategies~\cite{Hassibi2003howmuchtraining}. The converse in these
works was based on the Rayleigh fading assumption, using
either a direct approximation at high SNR~\cite{Marzetta1999capacity,ZhengTse2002Grassman} or
a duality upper bound with a well chosen auxiliary output
distribution~\cite{Yang2013CapacityLargeMIMO}.

In this work, we generalize the {DoF} result of~\cite{Marzetta1999capacity,Hochwald2000unitaryspacetime,ZhengTse2002Grassman,Yang2013CapacityLargeMIMO} to the \textit{generic} fading model. 
Specifically, we prove that the {DoF} given in \eqref{eq:opt_DoF_p2p_Rayleigh} is also the optimal {DoF} under generic block fading. The main technical contribution of this paper lies in the converse proof. Leveraging the duality upper bound~\cite{Moser}, we carefully choose an auxiliary output distribution with which we derive a tight {DoF} upper bound.

The remainder of this paper is organized as follows. 
We present the channel model in Section~\ref{sec:model}, and then the main result and the achievablility in Section~\ref{sec:result}. The converse proof is given in Section~\ref{sec:p2p_converse}. Finally, we conclude the paper with a future perspective in Section~\ref{sec:conclusion}. The mathematical preliminaries for our analysis are provided in the appendix. 

{\it Notation:} For random quantities, we use
non-italic letters with sans-serif fonts, e.g., a scalar $\rv{x}$, a vector $\rvVec{v}$, and a matrix
$\rvMat{M}$. Deterministic quantities are denoted 
with italic letters, e.g., a scalar $x$, a vector $\pmb{v}$, and a
matrix $\pmb{M}$. 
The Euclidean norm is denoted by $\|\vv\|$ and the Frobenius norm by $\|\Mm\|_{\rm F}$. 
The trace, 
transpose and conjugate transpose of $\Mm$ are denoted $\trace\{\Mm\}$, 
$\Mm^\T$ and $\Mm^\H$, respectively.
$\{\lambda_i(\Mm)\}$ denote the eigenvalues 
 of $\Mm$ in decreasing order.
 We use $\diag[x_1,\dots,x_N]$ to denote the diagonal matrix with diagonal entries $x_1,\dots,x_N$, and 
$H(\cdot)$, $h(\cdot)$, and 
$D(\cdot\|\cdot)$ to denote the entropy, differential entropy, and 
Kullback-Leibler~(KL) divergence, respectively. Logarithms are in base $2$; $\ind{\cdot}$ is the indicator function;
$\log^+(x) \defeq \max\{\log(x),0\}$;
$(x)^+ \defeq \max\{x,0\}$; 
$\Gamma(x) = \int_{0}^{\infty}z^{x-1}e^{-z}dz$ is the Gamma function; $\Gamma_m(a) \defeq \pi^{m(m-1)/2} \prod_{k=1}^{m} \Gamma(a-k+1)$ is the complex multivariate Gamma function; $\Ic_{\mu,\nu} \defeq \int_{0}^\infty \frac{x^\mu}{(1+x^2)^\nu} \dif x$ (see Lemma~\ref{lem:integrable} in the appendix). 

\section{Channel Model} \label{sec:model}
We consider a MIMO channel consisting of a transmitter equipped with $M$ antennas and a receiver with $N$ antennas. The channel between the transmitter and the receiver is flat and block fading with coherence time of $T$ channel uses. That is, the channel matrix $\rvMat{H} \in \CC^{N \times  M}$ containing the fading coefficients from the $M$ transmit antennas to the $N$ receive antennas remains unchanged during each block of length $T$ and changes independently between blocks. 
The 
realizations of $\rvMat{H}$ are {\em unknown} to both the transmitter and the receiver. 
During a coherence block $b$, the received signal is
\begin{align}
\rvMat{Y}[b] = \rvMat{H}[b] \rvMat{X}[b] + \rvMat{Z}[b], \quad b = 1,2,\dots,
\end{align}
where $\rvMat{Z}[b] \in \CC^{N \times T}$ is the additive white Gaussian noise~(AWGN) with independent and identically distributed (IID)~$\CN(0,1)$ entries and $\rvMat{X}[b]$ is the transmitted signal satisfying  the power constraint
\begin{align} \label{eq:power_constraint_P2P}
\frac{1}{n_{\rm b}}\sum_{b = 1}^{n_{\rm b}}\|\rvMat{X}[b]\|_{\rm F}^2 \le P T,
\end{align}
where $n_{\rm b}$ is the number of blocks spanned by a codeword. 
The parameter $P$ is referred to as the SNR of the channel. Hereafter, we omit the block index $b$ whenever confusion is not likely.

Since the channel is block memoryless, the channel capacity is given by
$
C(P) = \frac{1}{T}\displaystyle\max_{p_{\rvMat{X}}: \; \E[\|\rvMat{X}\|_{\rm F}^2] \le P T} I(\rvMat{X}; \rvMat{Y})
$
bits per channel use. 
Then we say that 
$d_{\rm opt}$ is the optimal DoF with
$
d_{\rm opt} \defeq \lim\limits_{P\to\infty} \frac{C(P)}{\log P}.
$
We assume that the channel matrix $\rvMat{H}$ is drawn from a generic distribution satisfying the following conditions:
\begin{align} \label{eq:generic_fading}
h(\rvMat{H}) &> -\infty, \quad \E[\| \rvMat{H} \|_{\rm F}^2] < \infty. 
\end{align}%
That is, the channel matrix has finite differential entropy and finite second moment.
This class of fading model includes as a special case the IID Rayleigh fading model in which $\rvMat{H}$ contains IID $\CN[0,1]$ entries considered in~\cite{Marzetta1999capacity,Hochwald2000unitaryspacetime,ZhengTse2002Grassman,Yang2013CapacityLargeMIMO}.

For notational convenience, we define some parameters related to the channel's coherence time $T$, the number of transmit antennas $M$, and the number of receive antennas $N$ as $\minNT \defeq \min\{N,T\}$, $\maxNT \defeq \max\{N,T\}$, $\minMNT \defeq \min\{M,N,T\}$, and $M' \defeq \min\{M,N\}$ for future reference.


\section{Main Result: The Optimal DoF} \label{sec:result}
The optimal DoF of the noncoherent MIMO generic block fading channel described above is stated in Theorem~\ref{thm:dof_P2P:DoFregion}.
\begin{theorem} \label{thm:dof_P2P:DoFregion}
	For the noncoherent $M\times N$ MIMO channel in generic, flat, and block fading with coherence interval $T$, if $T = 1$, the optimal DoF is zero; otherwise, the optimal DoF is given by
	\vspace{-.2cm}
		\begin{align} \label{eq:opt_DoF_p2p}
		d_{\rm opt} = M^* \left( 1- \frac{M^*}{T} \right)
		\end{align}  
	with $M^* \defeq \min\{M,N,\lfloor T/2\rfloor\}$.
\end{theorem}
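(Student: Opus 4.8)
\emph{Overall plan.} I would prove the matching bounds $d_{\rm opt}\ge M^*(1-M^*/T)$ and $d_{\rm opt}\le M^*(1-M^*/T)$ for $T\ge2$; for $T=1$ the channel is memoryless and its capacity grows only like $\log\log P$~\cite{Moser}, so the DoF is $0$ (also recovered by the converse below with $M^*=0$). Achievability is the easy direction: use the training scheme of~\cite{Hassibi2003howmuchtraining} with $M_0:=M^*\le\lfloor T/2\rfloor$ active transmit antennas, sending the orthogonal pilot $\sqrt P\,\Id_{M_0}$ in the first $M_0$ symbols of each block and an i.i.d.\ Gaussian data block of per-entry power $\Theta(P)$ in the remaining $T-M_0\ge M_0$ symbols. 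Since the least-squares estimate $\hat{\rvMat{H}}=\rvMat{Y}_p/\sqrt P$ has error $-\rvMat{Z}_p/\sqrt P$, of per-entry variance $\Theta(1/P)$ \emph{regardless of the law of}~$\rvMat{H}$, bounding $I(\rvMat{X};\rvMat{Y})\ge I(\rvMat{X}_d;\rvMat{Y}_d\cond\rvMat{Y}_p)$ and treating the residual uncertainty as worst-case Gaussian noise (of variance $O(1)$ because $\E[\|\rvMat{X}_d\|_{\rm F}^2]\le PT$ and $\E[\|\rvMat{H}\|_{\rm F}^2]<\infty$) yields a coherent $M_0\times N$ link over $T-M_0$ symbols per block, hence $I(\rvMat{X};\rvMat{Y})\ge M_0(T-M_0)\log P+O(1)$; optimizing the integer $M_0\le\min\{M,N\}$ with $M_0\le T-1$ gives $M_0=M^*$.

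\emph{Converse: choosing the auxiliary output.} For the converse I would use the duality bound $I(\rvMat{X};\rvMat{Y})\le\E_{\rvMat{X}}\!\big[D(p_{\rvMat{Y}\cond\rvMat{X}}\,\|\,q_{\rvMat{Y}})\big]$~\cite{Moser} and design $q_{\rvMat{Y}}$ so that the right-hand side is $M^*(T-M^*)\log P+o(\log P)$ \emph{uniformly} over all inputs with $\E[\|\rvMat{X}\|_{\rm F}^2]\le PT$. Assume $N\le T$ (the case $N>T$ is symmetric via the transposed SVD). Parametrize $\rvMat{Y}\in\CC^{N\times T}$ by its thin SVD $\rvMat{Y}=\rvMat{U}\rvMat{D}\rvMat{V}^{\H}$ with $\rvMat{U}\in U(N)$, $\rvMat{D}=\diag[\sigma_1,\dots,\sigma_N]$, $\sigma_1\ge\cdots\ge\sigma_N\ge0$, and $\rvMat{V}$ on the Stiefel manifold of $N$-frames in $\CC^T$; the change-of-variables factor is, up to constants, $J(\sigma)=\prod_{i<j}(\sigma_i^2-\sigma_j^2)^2\prod_i\sigma_i^{2(T-N)+1}$ (standard, via the complex Wishart density). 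I would let $q_{\rvMat{Y}}$ correspond, in these coordinates, to $\rvMat{U}$ Haar, $\rvMat{V}$ uniform on the Stiefel manifold, and $\sigma$ drawn from a finite mixture $\sum_{r=0}^{N}w_r\,q_\sigma^{(r)}$, where $q_\sigma^{(r)}$ makes $\sigma_1,\dots,\sigma_r$ heavy-tailed (factors $\propto(1+\sigma_i^2)^{-(1/2+\epsilon)}$, integrable by Lemma~\ref{lem:integrable}) and $\sigma_{r+1},\dots,\sigma_N$ Gaussian-tailed ($\propto e^{-\sigma_i^2}$); the mixture index absorbs the unknown ``effective rank'' of the signal at cost $-\log w_r=O(1)$.

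\emph{Converse: the key estimate.} With this choice, $D(p_{\rvMat{Y}\cond\rvMat{X}=\rvMat{x}}\,\|\,q_{\rvMat{Y}})=-h(\rvMat{Y}\cond\rvMat{X}=\rvMat{x})-\E[\log q_{\rvMat{Y}}(\rvMat{Y})]$: the $\rvMat{U}$- and $\rvMat{V}$-contributions are $O(1)$ (compact manifolds), and what remains is to balance $\E[\log J(\sigma)]-\E[\log q_\sigma(\sigma)]$ against $-h(\rvMat{Y}\cond\rvMat{X}=\rvMat{x})$. Writing $\rvMat{x}$ in reduced form (orthonormal rows times a factor of singular values $\mu_1\ge\cdots\ge\mu_s$), and letting $\tilde r$ be the number of $\mu_i$ above a fixed threshold, $\hat r:=\min\{\tilde r,N\}\le\min\{M,N,T\}$, the two estimates are: (a) $h(\rvMat{Y}\cond\rvMat{X}=\rvMat{x})\ge N\sum_{i\le\tilde r}\log^{+}\mu_i^2+O(1)$, obtained in coordinates adapted to the signal row space by lower-bounding the differential entropy of a linear image of $\rvMat{H}$ via $h(\rvMat{H})>-\infty$ and $\E[\|\rvMat{H}\|_{\rm F}^2]<\infty$; and (b) since the singular values of $\rvMat{Y}$ are at most $O(\mu_i)$ in the top $\hat r$ positions and $O(1)$ afterward, $\E[\log J(\sigma)]\le\sum_{i\le\hat r}(N+T-2i+\tfrac12)\log^{+}\mu_i^2+O(1)$ and $-\E[\log q_\sigma(\sigma)]\le(\tfrac12+\epsilon)\sum_{i\le\hat r}\log^{+}\mu_i^2+O(1)$. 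Combining (and discarding the $\le0$ terms $-N\log^{+}\mu_i^2$ for $\hat r<i\le\tilde r$) gives $D(p_{\rvMat{Y}\cond\rvMat{X}=\rvMat{x}}\,\|\,q_{\rvMat{Y}})\le\sum_{i\le\hat r}(T+1-2i+\epsilon)\log^{+}\mu_i^2+O(1)$. Because $\log^{+}\mu_i^2\le\log P+O(1)$ by the power constraint, the maximum is attained by setting $\log\mu_i^2=\log P$ exactly at the $i$ with positive coefficient, i.e.\ $i\le\lfloor T/2\rfloor$, so that $\sum_{i=1}^{j}(T+1-2i)=j(T-j)$ with $j=\min\{\hat r,\lfloor T/2\rfloor\}$; using $\max_{0\le j\le\min\{M,N,T\}}j(T-j)=M^*(T-M^*)$, one gets $I(\rvMat{X};\rvMat{Y})\le M^*(T-M^*)\log P+\epsilon\lceil T/2\rceil\log P+O(1)$ for every input, hence $d_{\rm opt}\le\big(M^*(T-M^*)+\epsilon\lceil T/2\rceil\big)/T$, and $\epsilon\downarrow0$ concludes.

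\emph{Main obstacle.} The hard part is uniformity over all $\rvMat{x}$: one must control $-h(\rvMat{Y}\cond\rvMat{X}=\rvMat{x})$ for degenerate inputs (small-norm or rank-deficient $\rvMat{x}$, where crude entropy bounds on $\rvMat{H}\rvMat{x}$ diverge) using only the robust bound $h(\rvMat{Y}\cond\rvMat{X}=\rvMat{x})\ge NT\log(\pi e)$ and the refinement (a), and for inputs of rank exceeding $M^*$ verify that the surplus singular values carry \emph{negative} coefficients $T+1-2i$ and so cannot enlarge the bound. The decisive bookkeeping is matching the tail exponents of $q_\sigma$ against the Jacobian power $2(T-N)+1$ and the Vandermonde interactions, so that the net coefficient of each $\log\mu_i^2$ is exactly $T+1-2i$ (up to the vanishing $\epsilon$); this is what pins the DoF at $M^*(T-M^*)/T$, and it is precisely where the generic-fading hypotheses enter---guaranteeing absolute continuity of $p_{\rvMat{Y}\cond\rvMat{X}}$ and the finiteness of every $O(1)$ term.
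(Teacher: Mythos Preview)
Your plan is correct and follows essentially the same route as the paper: the achievability is the same pilot scheme, and the converse is the duality bound with an auxiliary output law that is isotropic on the Stiefel factors and carefully tuned on the singular values so that, after the SVD Jacobian and the cancellation against $h(\rvMat{Y}\cond\rvMat{X})$, the net weight on the $i$-th large singular value is $T-2i+1+\epsilon$, whence $\sum_{i\le M^*}(T-2i+1)=M^*(T-M^*)$ and $\epsilon\downarrow 0$.

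Two points of comparison are worth noting. First, the paper does \emph{not} use your rank-mixture; it takes a single product law
\[
q_{\sigma}(\sigma_1,\dots,\sigma_{\minNT})\;\propto\;\prod_{i=1}^{\minNT}\frac{\sigma_i^{\,2T+2N-4i+1}}{(1+\sigma_i^2)^{\,T+N-2i+1+\epsilon/2}},
\]
whose numerator is exactly the (upper bound on the) Jacobian $\hat J$. This makes the Jacobian cancel identically and leaves only $\sum_i \alpha_i\,\E[\log(1+\sigma_i^2)]$ with $\alpha_i=T+N-2i+1+\epsilon/2$; your mixture achieves the same end but at the price of an extra $-\log w_r$ and a case split over $r$. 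Second, the paper does the ``$-N$'' cancellation globally rather than per singular value: it shows
\[
\sum_{i=1}^{\minNT}\E[\log(1+\sigma_i^2)]\;\le\;\E[\log\det(\Id_T+\rvMat{X}^\H\rvMat{X})]+O(1)
\]
via concavity of $\log\det$ and the bound $\lambda_1(\rvMat{H}^\H\rvMat{H})\le\|\rvMat{H}\|_{\rm F}^2$, and only then uses the crude estimate $\E[\log(1+\sigma_i^2)]\le\log P+O(1)$ on the residual $(\alpha_i-N)^+$ terms. This sidesteps your per-index claim in~(b) that $\E[\log\sigma_i^2\mid\rvMat{x}]\le\log^+\mu_i^2+O(1)$. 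That claim is in fact true (Weyl's inequality gives $\sigma_i(\rvMat{Y})\le\sigma_i(\rvMat{H}\rvMat{x})+\sigma_1(\rvMat{Z})$, and Lemma~\ref{lem:eig_AB} gives $\sigma_i(\rvMat{H}\rvMat{x})\le\mu_i\|\rvMat{H}\|_{\rm F}$, then Jensen), but you should state and prove it rather than assert it; as written, (b) is the only step that is not fully justified. The uniform lower bound in~(a) is fine: in coordinates adapted to the row space of $\rvMat{x}$ one gets $h(\rvMat{Y}\cond\rvMat{x})\ge N\sum_{i\le\tilde r}\log\mu_i^2+h(\rvMat{H}\Vm_{\tilde r})+O(1)$, and $h(\rvMat{H}\Vm_{\tilde r})\ge h(\rvMat{H})-O(1)$ uniformly in the unitary $\Vm_{\tilde r}$ by the second-moment bound on the complementary columns---this is exactly where both generic-fading hypotheses enter, as you note.
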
 
The zero optimal DoF result for $T = 1$ (fast fading) has been shown in~\cite{Moser} and is included in Theorem~\ref{thm:dof_P2P:DoFregion} for completeness. In this case, the channel capacity scales double-logarithmically with the SNR.
\begin{corollary}
	In the single input and/or single output case ($\min\{M,N\} = 1$) or the $T = 2$ case, the optimal DoF is $d_{\rm opt} = 1-\frac{1}{T}$.
\end{corollary}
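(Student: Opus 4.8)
The plan is to obtain the Corollary as a direct specialization of Theorem~\ref{thm:dof_P2P:DoFregion}: in each of the two stated cases I simply evaluate $M^* = \min\{M,N,\lfloor T/2\rfloor\}$ and substitute into \eqref{eq:opt_DoF_p2p}. No new estimates are needed, since both the achievability and the converse underlying the theorem already cover these parameter ranges.

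First I would handle the case $\min\{M,N\}=1$. If $T=1$, Theorem~\ref{thm:dof_P2P:DoFregion} gives $d_{\rm opt}=0$, which equals $1-\tfrac1T$, so the claim holds. If $T\ge 2$, then $\lfloor T/2\rfloor\ge 1$, hence $M^*=\min\{1,\lfloor T/2\rfloor\}=1$; plugging $M^*=1$ into \eqref{eq:opt_DoF_p2p} yields $d_{\rm opt}=1\cdot\bigl(1-\tfrac1T\bigr)=1-\tfrac1T$. Next I would handle the case $T=2$ with $M,N\ge 1$ arbitrary: here $\lfloor T/2\rfloor=1$, so $M^*=\min\{M,N,1\}=1$, and \eqref{eq:opt_DoF_p2p} again gives $d_{\rm opt}=1\cdot\bigl(1-\tfrac12\bigr)=\tfrac12=1-\tfrac1T$.

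There is essentially no obstacle; the only point requiring a moment's care is the boundary subcase $\min\{M,N\}=1$ with $T=1$, where Theorem~\ref{thm:dof_P2P:DoFregion} states the optimal DoF separately (as $0$) rather than through formula \eqref{eq:opt_DoF_p2p}. I would note that $1-\tfrac1T=0$ there as well, so the two descriptions agree and the corollary statement is consistent across all admissible $(M,N,T)$.
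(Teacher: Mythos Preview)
Your proposal is correct and is exactly what the paper intends: the corollary is stated without proof because it is an immediate specialization of Theorem~\ref{thm:dof_P2P:DoFregion}, obtained by evaluating $M^*$ in the two cases just as you do. There is nothing to add.
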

\begin{remark}
	Theorem~\ref{thm:dof_P2P:DoFregion} generalizes the optimal DoF of the noncoherent IID Rayleigh block fading channel given in~\cite{ZhengTse2002Grassman,Yang2013CapacityLargeMIMO}. This results show that the optimal DoF \eqref{eq:opt_DoF_p2p} holds even for non-Rayleigh fading channels as long as the channel matrix has finite differential entropy and finite power.
\end{remark}	


For $T > 1$, the optimal DoF is achieved by using only $M^*$ antennas and a simple pilot-based scheme: let the transmitter send pilot symbols in $M^*$ channel uses of a coherence block, and send data symbols in the remaining $T-M^*$ channel uses; the receiver estimates the channel based on the received pilot symbols and detects coherently the data symbols based on the channel estimate. A performance analysis of this pilot-based scheme following the same lines of \cite[Section V]{ZhengTse2002Grassman}, \cite{Hassibi2003howmuchtraining} shows that the DoF \eqref{eq:opt_DoF_p2p} is indeed achievable. 
We present next the converse proof.

\section{The Converse Proof} \label{sec:p2p_converse}
In this converse proof, we shall make use of the mathematical preliminaries (Lemmas~\ref{lemma:dof_P2P:transform_entropy}, \ref{lem:eig_AB}, and \ref{lem:integrable}) in the appendix. 
The channel input-output mutual information is expressed as
\begin{align}
I(\rvMat{X};\rvMat{Y}) &= h(\rvMat{Y}) - h(\rvMat{Y} \cond \rvMat{X}). \label{eq:mutual_infor}
\end{align}
By using Lemma~\ref{lemma:dof_P2P:transform_entropy} with
$\rvMat{W}=[\rvMat{H}\ \rvMat{Z}]$ and $\Am=\left[ 
\Xm \ \Id_T \right]^\T$ for each realization $\Xm$ of $\rvMat{X}$, the entropy $h(\rvMat{Y} | \rvMat{X})$ is given by 
\begin{align} \label{eq:bound_hY|X}
h(\rvMat{Y} | \rvMat{X}) &= N\E[\log\det(\Id_T + \rvMat{X}^\H \rvMat{X})] + \E\big[h(\breve{\rvMat{H}})\big],
\end{align}
where $\breve{\rvMat{H}}$ contains the first $T$ columns of $[\rvMat{H} \ \rvMat{Z}] \rvMat{U}_{[\rvMat{X} \ \Id_T]^\T}$ with $\rvMat{U}_{[\rvMat{X} \ \Id_T]^\T}$ being an $(M+T) \times (M+T)$ unitary matrix containing the left singular values of $[\rvMat{X} \ \Id_T]^\T$. In particular, under IID Rayleigh fading,  $\breve{\rvMat{H}}$ is an $N\times T$ Gaussian matrix with IID $\CN(0,1)$ entries, thus $h(\breve{\rvMat{H}}) = NT \log (\pi e)$. 

To bound $h(\rvMat{Y})$, we use the duality approach~\cite{Moser} as follows
\begin{align}
h(\rvMat{Y}) &= \E[-\log p_\rvMat{Y}(\rvMat{Y})] \\
&= \E[-\log q_\rvMat{Y}(\rvMat{Y})] - D(p_\rvMat{Y} \| q_\rvMat{Y}) \\
&\le \E[-\log q_\rvMat{Y}(\rvMat{Y})], \label{eq:dof_P2P_duality}
\end{align}
due to the nonnegativity of the KL divergence
$D(p_{\rvMat{Y}} \| q_{\rvMat{Y}})$. Here, the distribution $p_{\rvMat{Y}}$ is imposed by the input,
channel, and noise distributions, while $q_{\rvMat{Y}}$ is
any distribution in $\CC^{N \times T}$.
Note that a proper choice of $q_{\rvMat{Y}}$ 
is the key to a tight upper bound.
Let us consider
the singular value decomposition~(SVD) of $\rvMat{Y}$: 
\begin{align}
\rvMat{Y} = \rvMat{U}\rvMat{\Sigma} \rvMat{V}^\H,
\end{align}
where $\rvMat{U} \!\in\! \CC^{N\times \minNT}$ and $\rvMat{V} \!\in\! \CC^{T \times \minNT}$ are (truncated) unitary matrices, and $\rvMat{\Sigma} = \diag[\sigma_1, \dots, \sigma_\minNT]$ contains the singular values of $\rvMat{Y}$ in decreasing order. 
To make the SVD unique, we further assume that the diagonal elements of $\rvMat{V}$ are real and nonnegative~\cite{Yang2013CapacityLargeMIMO}. Then $\rvMat{U}$ belongs to the Stiefel manifold $\Sc(\CC^N,\minNT)$, while $\rvMat{V}$ belongs to a submanifold $\tilde{\Sc}(\CC^T,\minNT)$ of $\Sc(\CC^T,\minNT)$. 
The Jacobian of this SVD transformation is given by~\cite[App.~A]{ZhengTse2002Grassman}
\begin{align}
	\!\!\!J_{\maxNT,\minNT}(\sigma_1,\dots, \sigma_\minNT) &= \prod_{i=1}^{\minNT} \sigma_i^{2(\maxNT-\minNT)+1} \prod_{i<j}^{\minNT}(\sigma_i^2 - \sigma_j^2)^2 \label{eq:Jacobian} \\
	&=\prod_{i=1}^{\minNT} \sigma_i^{2(\maxNT-\minNT)+1} \prod_{i=1}^{\minNT} \prod_{j=i+1}^{\minNT} (\sigma_i^2 - \sigma_j^2)^2 \\
	&\le \prod_{i=1}^{\minNT} \sigma_i^{2(\maxNT-\minNT)+1} \prod_{i=1}^{\minNT} \sigma_i^{4(\minNT-i)} \\
	&= \prod_{i=1}^{\minNT} \sigma_i^{2T+2N - 4i + 1}, \label{eq:Jacobian_bound} \\
	&\eqdef \hat{J}_{\maxNT,\minNT}(\sigma_1,\dots, \sigma_\minNT)	
\end{align}
where the inequality is due to the decreasing order of $\sigma_1,\dots,\sigma_\minNT$.
We choose $q_{\rvMat{Y}}$ such that $\rvMat{U}$, $\rvMat{V}$, and $\rvMat{\Sigma}$ are mutually independent with the following distributions.
\begin{itemize}[leftmargin=*]
	\item Since the signal power is not captured in the singular vectors, as far as the DoF is concerned, the choice of distribution on the manifold for $\rvMat{U}$ and $\rvMat{V}$ can be arbitrary as long as $\E[-\log q_\rvMat{U}(\rvMat{U})]$ and $\E[-\log q_\rvMat{V}(\rvMat{V})]$ are finite. Here, for a closed-form expression, we let $\rvMat{U}$ and $\rvMat{V}$ be uniformly distributed in the Stiefel manifold $\Sc(\CC^N,\minNT)$ and submanifold $\tilde{\Sc}(\CC^T,\minNT)$, respectively. 
	That is, 
	\begin{align}
	q_\rvMat{U}(\Um) &= \frac{1}{|\Sc(\CC^N,\minNT)|} \ind{\Um \in \Sc(\CC^N,\minNT)}, \label{eq:qU}\\
	q_\rvMat{V}(\Vm) &= \frac{1}{|\tilde{\Sc}(\CC^T,\minNT)|} \ind{\Vm \in \tilde{\Sc}(\CC^T,\minNT)}, \label{eq:qV}
	\end{align}
	where the volumes of $\Sc(\CC^N,\minNT)$ and $\tilde{\Sc}(\CC^T,\minNT)$ are given by $|\Sc(\CC^n,m)| = \frac{2^m\pi^{mn}}{\Gamma_m(n)}$ and $|\tilde{\Sc}(\CC^n,m)| = \frac{|\Sc(\CC^n,m)|}{(2\pi)^m} = \frac{\pi^{m(n-1)}}{\Gamma_m(n)}$, respectively~\cite[Sec.~V]{Marques2008derivation}.
	
	
	\item On the other hand, the choice of $q_{\rvMat{\Sigma}}$ is crucial in deriving a tight DoF upper bound. Our choice is made so that, after taking the Jacobian of the SVD transformation into account, $\E[-\log q_\rvMat{Y}(\rvMat{Y})]$ depends on $\{\sigma_i^2\}$ only through $\E[\log(1+\sigma_i^2)]$, which can be straightforwardly upper bounded in terms of $\log P$. Specifically, we let the singular values of $\rvMat{Y}$ follow the distribution with the pdf
	\begin{align}
	q_{\sigma_1,\dots,\sigma_\minNT}(\sigma_1,\dots,\sigma_\minNT) 
	&= \beta  \frac{\hat{J}_{\maxNT,\minNT}(\sigma_1,\dots, \sigma_\minNT)}{\prod_{i=1}^{\minNT}(1+\sigma^2_i)^{\alpha_i}} \\
	&= \beta \prod_{i=1}^{\minNT} \frac{\sigma_i^{2T+2N-4i+1}}{(1+\sigma^2_i)^{\alpha_i}}, \label{eq:qSigma}
	\end{align}
	where $\beta$ is a scaling factor. Lemma~\ref{lem:integrable} implies that with $2\alpha_i = (2T+2N-4i+1) + 1 + \epsilon$, that is, $\alpha_i = T+N-2i+1 + \frac{\epsilon}{2}$, $i \in [\minNT]$, for any $\epsilon > 0$, then $\prod_{i=1}^{\minNT} \frac{\sigma_i^{2T+2N-4i+1}}{(1+\sigma^2_i)^{\alpha_i}}$ is integrable, i.e., there exists $\beta$ such that $\beta \prod_{i=1}^{\minNT} \frac{\sigma_i^{2T+2N-4i+1}}{(1+\sigma^2_i)^{\alpha_i}}$ is a pdf. Specifically, $\beta$ is given by 
	$\beta = \prod_{i=1}^{\minNT} \Ic^{-1}_{2T+2N-4i+1,\alpha_i}$.
	
\end{itemize}
Having specified $q_{\rvMat{Y}}$, we now proceed to compute $\E[-\log q_\rvMat{Y}(\rvMat{Y})]$ using the change of variables as
\begin{align}
&\E[-\log q_\rvMat{Y}(\rvMat{Y})] \notag \\
&= \E[-\log q_{\rvMat{U},\rvMat{\Sigma},\rvMat{V}}(\rvMat{U},\rvMat{\Sigma},\rvMat{V})] + \E[\log\left(J_{\maxNT,\minNT}(\sigma_1,\dots, \sigma_\minNT)\right)]\\
&= \E[-\log q_\rvMat{U}(\rvMat{U})] + \E[-\log q_\rvMat{V}(\rvMat{V})] \notag \\
&\quad + \E[-\log q_{\sigma_1,\dots,\sigma_\minNT}(\sigma_1,\dots,\sigma_\minNT)] \notag \\
&\quad+ \E[\log\left(J_{\maxNT,\minNT}(\sigma_1,\dots, \sigma_\minNT)\right)]. \label{eq:tmp815}
\end{align}
Plugging \eqref{eq:Jacobian_bound}, \eqref{eq:qU}, \eqref{eq:qV}, and \eqref{eq:qSigma} into \eqref{eq:tmp815}, we obtain
\begin{align} \label{eq:tmp486}
\E[-\log q_\rvMat{Y}(\rvMat{Y})] &\le \log |\Sc(\CC^N,\minNT)| + \log |\tilde{\Sc}(\CC^T,\minNT)| - \log \beta \notag \\
&\quad + \sum_{i=1}^{\minNT} \alpha_i \E[\log(1 + \sigma_i^2)].
\end{align}
Substituting the bounds of $h(\rvMat{Y})$ in \eqref{eq:bound_hY|X} and $h(\rvMat{Y} \cond \rvMat{X})$ in \eqref{eq:dof_P2P_duality}, \eqref{eq:tmp486} into \eqref{eq:mutual_infor}, we have the following bound
\begin{align}
&I(\rvMat{X};\rvMat{Y}) \notag \\
&\le \sum_{i=1}^{\minNT} \alpha_i \E[\log(1 + \sigma_i^2)] - N\E[\log\det(\Id_T + \rvMat{X}^\H \rvMat{X})] \notag \\
&\quad +  \log |\Sc(\CC^N,\minNT)| + \log |\tilde{\Sc}(\CC^T,\minNT)| - \log \beta - \E\big[h(\breve{\rvMat{H}})\big] \\
&= \underbrace{\sum_{i=1}^{\minNT} (\alpha_i - N) \E[\log(1 + \sigma_i^2)]}_{c_1} \notag \\
&\quad + \underbrace{N\Bigg(\sum_{i=1}^{\minNT}\E[\log(1 + \sigma_i^2)]  - \E[\log\det(\Id_T + \rvMat{X}^\H \rvMat{X})]\Bigg)}_{c_2} \notag \\
&\quad +  \log |\Sc(\CC^N,\minNT)| + \log |\tilde{\Sc}(\CC^T,\minNT)| - \log \beta \notag\\
&\quad - \E\big[h(\breve{\rvMat{H}})\big]. \label{eq:tmp422}
\end{align}
To proceed, we bound $c_1$ and $c_2$. For $c_2$, $\E[\log\det(\Id_T + \rvMat{X}^\H \rvMat{X})]$ is bounded in terms of the singular values of $\rvMat{Y}$ as follows
\begin{align}
&\sum_{i=1}^{\minNT} \E[\log(1 + \sigma_i^2)] \notag \\
&= \E[\log\det(\Id_N + \rvMat{Y} \rvMat{Y}^\H)] \\
&= \E[\log\det(\Id_N + \rvMat{H} \rvMat{X} \rvMat{X}^\H \rvMat{H}^\H \!+\! \rvMat{H} \rvMat{X} \rvMat{Z}^\H \!+\! \rvMat{Z}\rvMat{X}^\H \rvMat{H}^\H \!+\! \rvMat{Z} \rvMat{Z}^\H)] \\
&\le \E[{\log\det[{\Id_N \!+\! \E_{\rvMat{Z}}\big[\rvMat{H} \rvMat{X} \rvMat{X}^\H \rvMat{H}^\H \!+\! \rvMat{H} \rvMat{X} \rvMat{Z}^\H \!+\! \rvMat{Z}\rvMat{X}^\H \rvMat{H}^\H \!+\! \rvMat{Z} \rvMat{Z}^\H\big]}]}] \label{eq:tmp700}\\
&= \E[{\log\det((1+T)\Id_N + \rvMat{H} \rvMat{X} \rvMat{X}^\H \rvMat{H}^\H)}] \label{eq:tmp701}\\
&= \E[\log\det(\Id_M + (T+1)^{-1}\rvMat{X} \rvMat{X}^\H \rvMat{H}^\H\rvMat{H})] + N \log(T+1) \label{eq:tmp833}\\
&\le \sum_{i=1}^{\minMNT} \E[\log\Big(1 + \lambda_i\big((T+1)^{-1}\rvMat{X} \rvMat{X}^\H \rvMat{H}^\H\rvMat{H}\big)\Big)] +  N \log(T\!+\!1) \label{eq:tmp834} \\
&\le \sum_{i=1}^{\minMNT} \E[\log\big(1 \!+\! (T\!+\!1)^{-1} \lambda_i(\rvMat{X} \rvMat{X}^\H) \lambda_1(\rvMat{H}^\H\rvMat{H})\big)] \!+\! N \log(T\!+\!1) \label{eq:tmp835} \\
&\le \sum_{i=1}^{\minMNT} \E_{\rvMat{X}}\bigg[\log\bigg(1 + \lambda_i(\rvMat{X} \rvMat{X}^\H) \frac{\E_{\rvMat{H}}\big[\|\rvMat{H}\|_{\rm F}^2\big]}{T+1}\bigg)\bigg] + N \log(T\!+\!1) \label{eq:tmp836} \\
&\le \sum_{i=1}^{\minMNT} \E[\log\big(1 \!+\! \lambda_i(\rvMat{X} \rvMat{X}^\H)\big)] \!+\! \log^+\frac{\E\big[\|\rvMat{H}\|_{\rm F}^2\big]}{T+1} + N\log(T\!+\!1) \label{eq:tmp837} \\
&\le \sum_{i=1}^{\min\{M,T\}} \E[\log\big(1 + \lambda_i(\rvMat{X} \rvMat{X}^\H) \big)] + \log^+\frac{\E\big[\|\rvMat{H}\|_{\rm F}^2\big]}{T+1} \notag \\ &\quad+ N\log(T+1)   \\
&= \E[\log\det\big(\Id_T \!+\! \rvMat{X}^\H\rvMat{X} \big)] + \log^+\frac{\E\big[\|\rvMat{H}\|_{\rm F}^2\big]}{T+1} + N\log(T\!+\!1),\!\!
\end{align}
where \eqref{eq:tmp700} follows from Jensen's inequality since the $\log\det$ function is concave on the set of positive definite matrices; \eqref{eq:tmp701} holds because $\E[\rvMat{Z}] = \mathbf{0}$ and $\E[\rvMat{Z}\rvMat{Z}^\H] = T \Id_N$; \eqref{eq:tmp834} holds because the rank of $\rvMat{X} \rvMat{X}^\H \rvMat{H}^\H\rvMat{H}$ is upper bounded by $\minMNT \defeq \min\{M,N,T\}$; \eqref{eq:tmp835} follows from Lemma~\ref{lem:eig_AB}; \eqref{eq:tmp836} is due to $\lambda_1(\rvMat{H}^\H\rvMat{H}) \le \|\rvMat{H}\|_{\rm F}^2$ and Jensen's inequality; and \eqref{eq:tmp837} follows from 
\begin{align}
	\log(1 + ax) &\le \log(\max\{1,a\} + \max\{1,a\}x) \\
	&= \log(1+x) + \log^+a, \forall x \ge 0, a\ge 0.
\end{align}
Therefore, 
\begin{align}
c_2 \le N\log^+\frac{\E\big[\|\rvMat{H}\|_{\rm F}^2\big]}{T+1} + N^2\log(T+1). \label{eq:c2}
\end{align}
For $c_1$, we use Jensen's inequality to write
\begin{align}
c_1 &\le \sum_{i=1}^{\minNT} (\alpha_i - N)^+ \log\big(1 + \E[\sigma_i^2]\big) \\
&= \sum_{i=1}^{\minNT} (T - 2i+1 + \epsilon/2)^+ \log\big(1 + \E[\sigma_i^2]\big),
\label{eq:c1a}
\end{align}
where we recall that $\alpha_i = T+N-2i+1 + \frac{\epsilon}{2}$, $i \in [\minNT]$.
For $i = 1,\dots, \minMNT$, 
we bound $\E[\sigma_i^2]$ as 
\begin{align}
\E[\sigma_i^2] &\le \E[{\trace[\rvMat{Y}\rvMat{Y}^\H]}] \\
&= \E[{\trace[\rvMat{H} \rvMat{X} \rvMat{X}^\H \rvMat{H}^\H + \rvMat{H} \rvMat{X} \rvMat{Z}^\H + \rvMat{Z}\rvMat{X}^\H \rvMat{H}^\H + \rvMat{Z} \rvMat{Z}^\H]}] \label{eq:tmp721}\\
&=  \E[{\trace[\rvMat{X} \rvMat{X}^\H \rvMat{H}^\H \rvMat{H}]}] + \E[{\trace[\rvMat{Z} \rvMat{Z}^\H]}]\\
&= \sum_{i=1}^{\minMNT}\E[\lambda_i(\rvMat{X} \rvMat{X}^\H \rvMat{H}^\H \rvMat{H})] + NT \\
&\le \sum_{i=1}^{\minMNT}\E[\lambda_i(\rvMat{X} \rvMat{X}^\H) \lambda_1( \rvMat{H}^\H \rvMat{H})] + NT \label{eq:tmp462}\\
&\le \sum_{i=1}^{\min\{M,T\}}\E[\lambda_i(\rvMat{X} \rvMat{X}^\H)] \E[\|\rvMat{H}\|_{\rm F}^2] + NT \label{eq:tmp463}\\
&= \E[\|\rvMat{X}\|_{\rm F}^2] \E[\|\rvMat{H}\|_{\rm F}^2] + NT \\
&\le PT \E[\|\rvMat{H}\|_{\rm F}^2] + NT, \label{eq:tmp699}
\end{align}
where \eqref{eq:tmp462} follows from Lemma~\ref{lem:eig_AB}, and \eqref{eq:tmp463} is due to $L \le \min\{M,T\}$ and $\lambda_1( \rvMat{H}^\H \rvMat{H}) \le \|\rvMat{H}\|_{\rm F}^2$.
Thus 
\begin{align}
\!\!\!\!\!\!\!\!\log(1+ \E[\sigma_i^2]) &\le \log\left(PT \E[\|\rvMat{H}\|_{\rm F}^2] + NT + 1\right) \\
&= \log \left(P\Big(T \E[\|\rvMat{H}\|_{\rm F}^2] + \frac{NT+1}{P}\Big)\right) \\
&= \log P \!+\! \log T \!+\! \log \E[\|\rvMat{H}\|_{\rm F}^2] \!+\! o(1)  \label{eq:tmp717}
\end{align}
as $ P\to \infty$.
In the high-SNR regime, since the noise variance is bounded, the main contributor to the power of $\rvMat{Y}$ is $\rvMat{H} \rvMat{X}$, which has rank at most $\minMNT$. Thus, it is intuitive that the $\minNT - \minMNT$ smallest singular values of $\rvMat{Y}$ carry information about the noise only and are bounded. To see this, we follow the footsteps in \cite[p.~377]{ZhengTse2002Grassman} as follows. Since $\sigma_{\minMNT}, \dots, \sigma_{\minNT}$ are the $\minNT - \minMNT$ smallest singular values of $\rvMat{Y}$, for any $(N - \minMNT) \times N$ truncated unitary matrix $\Qm$, we have
$
\sum_{i=\minMNT + 1}^{\minNT} \sigma_i^2 \le \trace[\Qm \rvMat{Y} \rvMat{Y}^\H \Qm^\H].
$
We write $\rvMat{Z} = \rvMat{Z}_1 + \rvMat{Z}_2$, where $\rvMat{Z}_1$ is the projection of $\rvMat{Z}$ onto the subspace spanned by the row vectors of $\rvMat{H} \rvMat{X}$, and $\rvMat{Z}_2$ contains the perpendicular components. 
Since the subspace $\Span[\rvMat{H} \rvMat{X}]$ is independent of $\rvMat{Z}$, the total power in $\rvMat{Z}_2$ is $\E[\trace( \rvMat{Z}_2 \rvMat{Z}_2^\H)] = N(T-\minMNT)$. 
Since $\rvMat{H} \rvMat{X} + \rvMat{Z}_1$ has rank $\minMNT$, we can find a $(N - \minMNT) \times N$ truncated unitary matrix $\Qm_0$ such that $\Qm_0(\rvMat{H} \rvMat{X} + \rvMat{Z}_1) = \mathbf{0}$. Note that $\Qm_0$ is independent of $\rvMat{Z}_2$, thus
\begin{multline}
\E[\sum_{i=\minMNT + 1}^{\minNT} \sigma_i^2] \le \E[\trace(\Qm_0 \rvMat{Y} \rvMat{Y}^\H \Qm_0^\H)] \\= \E[\trace(\Qm_0 \rvMat{Z}_2 \rvMat{Z}_2^\H \Qm_0^\H)] = (N-\minMNT)(T-\minMNT).
\end{multline}
This implies that 
\begin{align}
\E[\sigma_i^2] \le (N-\minMNT)(T-\minMNT), \quad i  = \minMNT + 1,\dots,\minNT. \label{eq:tmp715}
\end{align}
Plugging \eqref{eq:tmp717} and \eqref{eq:tmp715} into \eqref{eq:c1a}, 
we get
\begin{align}
c_1 &\le \sum_{i=1}^{\minMNT} (T - 2i + 1 + \epsilon/2)^+ \big(\log P + \log \big(T\E[\|\rvMat{H}\|_{\rm F}^2]\big)\big) \notag \\
&\quad +  \sum_{i=\minMNT + 1}^{\minNT} (T - 2i + 1 + \epsilon/2)^+ \log\big(1+ (N-\minMNT)(T-\minMNT)\big)\!\! 
 \notag \\
 &\quad + o(1).
\label{eq:c1}
\end{align}
Substituting \eqref{eq:c2} and \eqref{eq:c1} into \eqref{eq:tmp422}, after some manipulations, we obtain that for any $\epsilon > 0$,
\begin{equation}
I(\rvMat{X};\rvMat{Y}) \le \sum_{i=1}^{\minMNT} (T - 2i + 1 + \epsilon/2)^+ \log P + c_0 + o(1)
\label{eq:rate_scaling}
\end{equation}
as $P \to \infty$, where 
\begin{align}
	c_0 &= \sum_{i=1}^{\minMNT} (T - 2i + 1 + \epsilon/2)^+ \log \big(T\E[\|\rvMat{H}\|_{\rm F}^2]\big) \notag \\
	&\quad +  \sum_{i=\minMNT + 1}^{\minNT} (T - 2i + 1 + \epsilon/2)^+ \log\big(1+ (N-\minMNT)(T-\minMNT)\big) \notag \\
	&\quad + \log |\Sc(\CC^N,\minNT)| + \log |\tilde{\Sc}(\CC^T,\minNT)| - \log \beta - \E\big[h(\breve{\rvMat{H}})\big] \notag \\
	&\quad + N\log^+\frac{\E\big[\|\rvMat{H}\|_{\rm F}^2\big]}{T+1} + N^2\log(T+1).
\end{align}
We see that the high-SNR capacity pre-log is $\sum_{i=1}^{\minMNT} (T - 2i + 1 + \epsilon/2)^+$. Letting $\epsilon$ arbitrarily close to zero (but remaining positive), this pre-log converges to 
$
	\sum_{i=1}^{\minMNT} (T - 2i + 1)^+ = \sum_{i=1}^{\min\{\minMNT,\lfloor T/2 \rfloor \}} (T \!-\! 2i \!+\! 1)
	\!=\! \sum_{i=1}^{M^*} (T \!-\! 2i \!+\! 1)
	\!=\! M^*(T\!-\!M^*),\!
$
where the first equality holds because $T - 2i + 1 < 0$ whenever $i> \floor{T/2}$.
Thus the optimal DoF is upper-bounded by $M^*\big(1-\frac{M^*}{T}\big)$. Furthermore, as $\epsilon \to 0$,
\begin{align}
c_0 &\to \minNT + \minNT(N+T-1)\log \pi - \log(\Gamma_\minNT(N) \Gamma_\minNT(T)) \notag \\
&\quad + \sum_{i=1}^{\minNT} \log \Ic_{2T+2N-4i+1,\alpha_i} + N^2 \log (T+1) - \E\big[h(\breve{\rvMat{H}})\big] \notag \\
&\quad + N\log^+\frac{\E\big[\|\rvMat{H}\|_{\rm F}^2\big]}{T+1} + M^*(T-M^*) \log\big(T\E[\|\rvMat{H}\|_{\rm F}^2]\big) \notag \\
&\quad + \ind{M'\le \lfloor T/2 \rfloor}\big(\lfloor T/2 \rfloor(T-\lfloor T/2 \rfloor) - M'(T-M')\big) \notag \\
&\qquad~ \times \log\big(1+ (N-M')(T-M')\big).
\end{align}


\section{Conclusion and Perspective} \label{sec:conclusion}
In this paper, we have derived the optimal DoF for the noncoherent MIMO generic block-fading channel. Our results generalize the known optimal DoF for the Rayleigh fading case to a wider class of fading in which the channel matrix has finite differential entropy and finite second moment. 


In the future, it would be interesting, as in the IID Rayleigh block fading case~\cite{ZhengTse2002Grassman,Yang2013CapacityLargeMIMO}, 
to characterize the constant term after the logarithmic term in the capacity formula.\footnote{In our analysis, the term $c_0$ in \eqref{eq:rate_scaling} would be a loose upper bound on the constant term in the channel capacity since the terms $\Ic_{2T+2N-4i+1,\alpha_i}$---although they do not scale with the power---become very large as $\epsilon \to 0$.}
Note that even for IID Rayleigh fading, no high-SNR approximation (up to a vanishing term) of the channel capacity has been found for the case $1 < T < 2\min\{M,N\}$. 
To this end, the \textit{escape-to-infinity} property~\cite{Moser,Durisi2011high} would be useful. It allows one to assume without loss of generality that the high-{SNR} capacity-achieving input distribution has no mass in a disk around the origin, whose radius can be made arbitrarily large. 

%

Our novel converse proof can be used for other problems, such as characterizing the optimal DoF region for the noncoherent MIMO multiple-access channel (MAC), which is not known even for the IID Rayleigh block fading case.
For the two-user single-input multiple-output (SIMO) MAC in generic block fading, we have found the optimal DoF region in~\cite{Hoang2018arXivDoFMAC}, but 
a generalization to the MIMO MAC was not obvious. 
The main challenge is to deal with inter-user interference which becomes an equivalent colored noise while decoding the signal of a user. This can be taken into account in the choice of auxiliary output distribution following the approach in the current paper.

%
\begin{appendices}
\section*{Appendix \\ Mathematical Preliminaries}
\begin{lemma}\label{lemma:dof_P2P:transform_entropy}
	Let $\Am = \Um \Sigmam \Vm \in\mathbb{C}^{m\times t}$ have full column rank ($\Um \in \CC^{m\times m}$), and
	$\rvMat{W}\in\mathbb{C}^{n\times m}$ be a random matrix such that
	$h(\rvMat{W})>-\infty$ and $\E[\|\rvMat{W}\|_{\rm F}^2] < \infty$. Then we have
	\begin{align}
		h(\rvMat{W}\Am) &= n \log\det(\Am^\H\Am) + h(\rvMat{W}'),
	\end{align}%
	where $\rvMat{W}'$ contains the first $t$ columns of $\rvMat{W}\Um$. Furthermore, $h(\rvMat{W}')$ is finite, i.e., $-\infty < h(\rvMat{W}') <\infty$.  
\end{lemma}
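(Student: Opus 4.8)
The plan is to factor the affine map through its singular value decomposition, so that it reduces to a deterministic unitary rotation of $\rvMat{W}$ (which leaves differential entropy unchanged) followed by right-multiplication by an invertible $t\times t$ matrix (which produces the $\log\det$ term), and then to prove finiteness of the leftover entropy from the two hypotheses on $\rvMat{W}$.

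First I would write $\Am=\Um\Sigmam\Vm$ with $\Um\in\CC^{m\times m}$, $\Vm\in\CC^{t\times t}$ unitary and $\Sigmam=[\Sigmam_t^\T\ \mathbf{0}]^\T\in\CC^{m\times t}$, where $\Sigmam_t\in\CC^{t\times t}$ is the diagonal matrix of singular values; the full-column-rank hypothesis makes $\Sigmam_t$ invertible. Partitioning $\rvMat{W}\Um=[\,\rvMat{W}'\ \ \rvMat{W}''\,]$ into its first $t$ and its last $m-t$ columns, one gets $\rvMat{W}\Am=\rvMat{W}\Um\Sigmam\Vm=\rvMat{W}'\Sigmam_t\Vm=\rvMat{W}'\Bm$, where $\Bm\defeq\Sigmam_t\Vm$ is invertible and $\rvMat{W}'$ is exactly the matrix appearing in the statement.

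Next I would apply the change-of-variables formula for differential entropy, regarding $\CC^{k}$ as $\RR^{2k}$: a $\CC$-linear bijection with matrix $\Cm$ has real Jacobian determinant of modulus $|\det(\Cm)|^2$. Using $\stack{\rvMat{W}\Um}=(\Um^\T\otimes\Id_n)\stack{\rvMat{W}}$, with $\det(\Um^\T\otimes\Id_n)=(\det\Um)^n$ of unit modulus, the first step gives $h(\rvMat{W}\Um)=h(\rvMat{W})$; using $\stack{\rvMat{W}'\Bm}=(\Bm^\T\otimes\Id_n)\stack{\rvMat{W}'}$, with $\det(\Bm^\T\otimes\Id_n)=(\det\Bm)^n$, the second step gives $h(\rvMat{W}'\Bm)=h(\rvMat{W}')+n\log|\det\Bm|^2=h(\rvMat{W}')+n\log\det(\Bm^\H\Bm)$. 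Since $\Bm^\H\Bm=\Vm^\H\Sigmam_t^2\Vm=\Vm^\H\Sigmam^\H\Sigmam\Vm=\Am^\H\Am$, combining these with the identity $\rvMat{W}\Am=\rvMat{W}'\Bm$ yields $h(\rvMat{W}\Am)=h(\rvMat{W}')+n\log\det(\Am^\H\Am)$, as claimed.

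The remaining, and most delicate, part is to show $-\infty<h(\rvMat{W}')<\infty$. For the upper bound, $\rvMat{W}'$ is a submatrix of $\rvMat{W}\Um$, so $\E[\normf{\rvMat{W}'}^2]\le\E[\normf{\rvMat{W}\Um}^2]=\E[\normf{\rvMat{W}}^2]<\infty$, and a distribution on $\RR^{2nt}$ with finite second moment has differential entropy bounded above by that of a Gaussian with the same second moment; hence $h(\rvMat{W}')<\infty$, and likewise $h(\rvMat{W}'')<\infty$ and $h(\rvMat{W})<\infty$. For the lower bound, the hypothesis $h(\rvMat{W})>-\infty$ ensures that $\rvMat{W}$, hence $\rvMat{W}\Um$ and its column blocks, admits a density, so the chain rule $h(\rvMat{W})=h(\rvMat{W}\Um)=h(\rvMat{W}')+h(\rvMat{W}''\cond\rvMat{W}')$ is valid; since $h(\rvMat{W}''\cond\rvMat{W}')\le h(\rvMat{W}'')<\infty$, both summands are strictly less than $+\infty$, and because their sum $h(\rvMat{W})$ is finite, neither can equal $-\infty$, whence $h(\rvMat{W}')>-\infty$. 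I expect this chain-rule step to be the main obstacle: one must be sure that no summand is $+\infty$ so that the splitting and bounding are legitimate, and this is precisely where the finite-second-moment hypothesis is used, via the Gaussian maximum-entropy bound.
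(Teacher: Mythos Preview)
Your proof is correct and complete. The paper itself does not give an in-text proof of this lemma but simply cites \cite[Appendix~A-1]{Hoang2018arXivDoFMAC}; your argument---factoring through the SVD so that only a unitary rotation and an invertible $t\times t$ right-multiplication remain, then handling finiteness of $h(\rvMat{W}')$ via the Gaussian maximum-entropy bound together with the chain rule---is exactly the standard route and is almost certainly what the cited appendix contains.
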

\begin{proof}
	See \cite[Appendix A-1]{Hoang2018arXivDoFMAC}.
\end{proof}

\begin{lemma} \label{lem:eig_AB}
	If $\Am$ and $\Bm$ are $n \times n$ Hermitian positive semidefinite matrices, then 
	\begin{align}
	\lambda_i(\Am) \lambda_n(\Bm) \le \lambda_i(\Am\Bm) \le \lambda_i(\Am) \lambda_1(\Bm), \quad i \in [n],
	\end{align}
where $\{\lambda_i(\Mm)\}$ denote the eigenvalues 
of a matrix $\Mm$ in decreasing order.
\end{lemma}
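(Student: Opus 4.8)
The plan is to reduce the claim about the (generally non-Hermitian) product $\Am\Bm$ to a statement about a Hermitian matrix, and then combine two standard facts: congruence-invariance of the Loewner order and Weyl's monotonicity theorem. Write $\Am^{1/2}$ for the unique Hermitian positive semidefinite square root of $\Am$. The first step is the spectral identity
\[
\lambda_i(\Am\Bm) \;=\; \lambda_i\big(\Am^{1/2}\Bm\Am^{1/2}\big), \qquad i\in[n],
\]
which holds because, setting $\Xm:=\Am^{1/2}$ and $\Ym:=\Am^{1/2}\Bm$, we have $\Am\Bm=\Xm\Ym$ and $\Am^{1/2}\Bm\Am^{1/2}=\Ym\Xm$, and $\Xm\Ym$ and $\Ym\Xm$ share the same characteristic polynomial. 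In particular $\Am^{1/2}\Bm\Am^{1/2}\succeq 0$ forces the eigenvalues of $\Am\Bm$ to be real and nonnegative, so the ordered quantities $\lambda_i(\Am\Bm)$ are well defined and the statement makes sense.

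The second step bounds $\Am^{1/2}\Bm\Am^{1/2}$ in the Loewner order. From $\lambda_n(\Bm)\,\Id \preceq \Bm \preceq \lambda_1(\Bm)\,\Id$ and the fact that $\Cm\preceq\Dm$ implies $\Rm^\H\Cm\Rm\preceq\Rm^\H\Dm\Rm$ for any $\Rm$ (applied with $\Rm=\Am^{1/2}$), I obtain
\[
\lambda_n(\Bm)\,\Am \;\preceq\; \Am^{1/2}\Bm\Am^{1/2} \;\preceq\; \lambda_1(\Bm)\,\Am .
\]
Applying Weyl's monotonicity theorem — the map $\Mm\mapsto\lambda_i(\Mm)$ is nondecreasing on Hermitian matrices ordered by $\preceq$, a direct consequence of the Courant--Fischer min--max characterization — to each of the two inequalities, and using $\lambda_i(c\Mm)=c\,\lambda_i(\Mm)$ for $c\ge 0$, yields
\[
\lambda_i(\Am)\,\lambda_n(\Bm) \;\le\; \lambda_i\big(\Am^{1/2}\Bm\Am^{1/2}\big) \;\le\; \lambda_i(\Am)\,\lambda_1(\Bm), \qquad i\in[n].
\]
Combining this with the spectral identity of the first step gives precisely the claimed chain $\lambda_i(\Am)\lambda_n(\Bm)\le\lambda_i(\Am\Bm)\le\lambda_i(\Am)\lambda_1(\Bm)$.

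I do not expect a genuine obstacle here: both ingredients are classical, and no invertibility or perturbation argument is needed, because the congruence step and the $\Xm\Ym$/$\Ym\Xm$ identity remain valid when $\Am$ (hence $\Am^{1/2}$) is singular. The only points worth recording carefully are why $\lambda_i(\Am\Bm)$ is a meaningful real, ordered quantity in the first place — settled by the similarity to the Hermitian matrix $\Am^{1/2}\Bm\Am^{1/2}$ — and the precise form of Weyl monotonicity being invoked; everything else is immediate.
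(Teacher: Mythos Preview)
Your argument is correct. The reduction $\lambda_i(\Am\Bm)=\lambda_i(\Am^{1/2}\Bm\Am^{1/2})$ via the $\Xm\Ym$/$\Ym\Xm$ identity, the congruence sandwich $\lambda_n(\Bm)\Am\preceq\Am^{1/2}\Bm\Am^{1/2}\preceq\lambda_1(\Bm)\Am$, and the application of Weyl monotonicity are all valid exactly as you state them, including in the singular case.

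The paper takes a different route: it does not argue from first principles but simply invokes Theorems~3 and~4 of Wang and Zhang (1992) with $k=1$, which give general multiplicative eigenvalue inequalities for products of Hermitian positive semidefinite matrices. Your proof is thus more self-contained and elementary, needing only the square-root trick and Courant--Fischer, whereas the paper's version is shorter on the page but offloads the work to an external reference. Either approach is perfectly adequate for this lemma; yours has the advantage of making explicit why the eigenvalues of the non-Hermitian product $\Am\Bm$ are real and ordered in the first place, a point the paper leaves implicit.
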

\begin{proof}
	The result follows immediately by applying \cite[Theorem 3]{Wang1992some} and \cite[Theorem 4]{Wang1992some} with $k=1$ therein.
\end{proof}

\begin{lemma} \label{lem:integrable}
	The function $f(x) = \frac{x^\mu}{(1+x^2)^\nu}, x \ge 0$ is integrable for any $\mu\ge 1$ and $2\nu > \mu+1$.
\end{lemma}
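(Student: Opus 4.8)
The plan is to show that $\Ic_{\mu,\nu} = \int_0^\infty \frac{x^\mu}{(1+x^2)^\nu}\dif x$ is finite by splitting the range of integration at $x = 1$ and controlling the integrand separately near the origin and near infinity. First I would observe that, since $2\nu > \mu + 1 \ge 2$, the exponent $\nu$ is positive, so $(1+x^2)^\nu \ge 1$ on all of $[0,\infty)$. On the compact interval $[0,1]$ the integrand is therefore bounded by $x^\mu \le 1$ (using $\mu \ge 1 > 0$) and is continuous, whence $\int_0^1 f(x)\dif x \le 1 < \infty$. (In fact $\mu > -1$ already suffices for integrability at the origin, but $\mu \ge 1$ makes the bound immediate.)

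For the tail $[1,\infty)$ I would use $1 + x^2 \ge x^2$, hence $f(x) \le x^{\mu}/x^{2\nu} = x^{\mu - 2\nu}$. The hypothesis $2\nu > \mu + 1$ gives $\mu - 2\nu < -1$, so $\int_1^\infty x^{\mu - 2\nu}\dif x = \frac{1}{2\nu - \mu - 1} < \infty$. Adding the two contributions gives $\Ic_{\mu,\nu} < \infty$, which is the claim.

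As an alternative — useful if one later wants the value of $\beta$ in closed form rather than merely its finiteness — I would instead substitute $t = x^2$ to obtain $\Ic_{\mu,\nu} = \tfrac12\int_0^\infty t^{(\mu-1)/2}(1+t)^{-\nu}\dif t = \tfrac12\,B\!\left(\tfrac{\mu+1}{2},\, \nu - \tfrac{\mu+1}{2}\right)$, a Beta integral which is finite precisely when both of its arguments are positive, i.e. $\mu > -1$ and $2\nu > \mu + 1$; both hold under the stated hypotheses.

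There is essentially no real obstacle here: the only point requiring care is matching the two one-sided requirements — $\mu$ not too negative so that the integrand is integrable at $0$, and $2\nu - \mu$ large enough so that the power-law tail decays faster than $x^{-1}$ — to the stated conditions $\mu \ge 1$ and $2\nu > \mu + 1$, together with the remark that $\nu > 0$ so that the factor $(1+x^2)^\nu$ may be harmlessly discarded on $[0,1]$.
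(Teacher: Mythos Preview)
Your proof is correct, but the route differs from the paper's. You split the range at $x=1$ and bound the integrand by $1$ on $[0,1]$ and by $x^{\mu-2\nu}$ on $[1,\infty)$, then offer a Beta-function evaluation as a bonus. The paper instead writes $\nu = \tfrac{\mu+1}{2}+\epsilon$ and observes that $x^{\mu}/(1+x^2)^{(\mu+1)/2+\epsilon}$ is \emph{decreasing} in $\mu$ for every fixed $x\ge 0$ (since $\log x - \tfrac12\log(1+x^2) = \log\bigl(x/\sqrt{1+x^2}\bigr) < 0$), thereby reducing to the case $\mu=1$, where $\int_0^\infty x/(1+x^2)^{1+\epsilon}\dif x = 1/(2\epsilon)$ is computed directly. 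Your argument is the more standard textbook comparison test and, via the Beta identity $\Ic_{\mu,\nu} = \tfrac12 B\bigl(\tfrac{\mu+1}{2},\nu-\tfrac{\mu+1}{2}\bigr)$, actually yields the exact value (hence a closed form for the normalizing constant $\beta$); the paper's monotonicity trick is slicker in that it produces a single global dominating function and the clean bound $\Ic_{\mu,\nu}\le \tfrac{1}{2\nu-\mu-1}$ without any splitting, at the cost of being a little less transparent about why $\mu\ge 1$ is the right threshold for that particular comparison.
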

\begin{proof}
	Since $f(x)$ is a nonnegative function, we have $\int_0^\infty f(x) \dif x \le \int_0^\infty g(x) \dif x$ if $f(x)\le g(x), \forall x \ge 0$. 
	Let $\nu = \frac{\mu+1}{2} + \epsilon$ with $\epsilon > 0$, we have $\frac{x^\mu}{(1+x^2)^\nu} = \frac{x^\mu}{(1+x^2)^{\frac{\mu+1}{2}+\epsilon}}$ is a decreasing function in $\mu$ for any $x \ge 0$. Thus $\frac{x^\mu}{(1+x^2)^\nu} \le \frac{x}{(1+x^2)^{1+\epsilon}}$, $x \ge 0$, $\forall \mu\ge 1$. Therefore, $\int_{0}^\infty \frac{x^\mu}{(1+x^2)^\nu} \dif x \le \int_{0}^\infty \frac{x}{(1+x^2)^{1+\epsilon}} \dif x	 
	= \frac{1}{2\epsilon} < \infty$, $\forall \epsilon > 0.$
\end{proof}
We denote $\Ic_{\mu,\nu} \defeq \int_{0}^\infty \frac{x^\mu}{(1+x^2)^\nu} \dif x$. Note that $\Ic_{\mu,\nu} \to \infty$ as $2\nu - \mu - 1\to0$.

\end{appendices}
\bibliographystyle{IEEEtran}
\bibliography{IEEEabrv,./biblio}
\end{document}